\pdfoutput=1
\documentclass[11pt]{article}
\usepackage{ACL2023}
\usepackage{times}
\usepackage{latexsym}
\usepackage[T1]{fontenc}
\usepackage[utf8]{inputenc}
\usepackage{microtype}
\usepackage{inconsolata}

\usepackage{amsfonts}
\usepackage{amsmath}
\usepackage{amssymb}
\usepackage{amsthm}
\usepackage{enumitem}
\usepackage{thm-restate}
\usepackage{algorithm}
\usepackage{bbold}
\usepackage{caption}
\usepackage{subcaption}
\usepackage{setspace}
\usepackage[noend]{algpseudocode}
\usepackage{algorithm}
\usepackage{relsize}
\usepackage{hyperref}
\input{algonames.tex}

\usepackage{cleveref}
\crefname{section}{\S}{\S\S}
\crefname{table}{Tab.}{}
\crefname{figure}{Fig.}{}
\crefname{algorithm}{Alg.}{}
\crefname{equation}{Eq.}{}
\crefname{appendix}{App.}{}
\crefname{theorem}{Thm.}{}
\crefname{lemma}{Lem.}{}
\crefformat{section}{\S#2#1#3} 
\crefname{proposition}{Prop.}{}
\crefname{definition}{Def.}{}

\theoremstyle{plain}
\newtheorem{definition}{Definition}

\newtheorem{proposition}{Proposition}

\newcommand{\defn}[1]{\textbf{#1}}
\newcommand{\defeq}[0]{\mathrel{\stackrel{\textnormal{\tiny def}}{=}}}
\newcommand{\bigO}[1]{\mathcal{O}(#1)}
\newcommand{\kleene}[1]{#1^\ast}
\newcommand{\derives}{\overset{\ast}{\Rightarrow}}

\newcommand{\tree}{\boldsymbol{\tau}}
\newcommand{\trees}[2]{\mathcal{T}_{\nt{#1}}(#2)}
\newcommand{\str}{\boldsymbol{w}}
\newcommand{\valpha}{\boldsymbol{\alpha}}
\newcommand{\nt}[1]{\mathrm{#1}}
\newcommand{\alphabet}{\Sigma}
\newcommand{\nonterminals}{\mathcal{N}}
\newcommand{\start}{\nt{S}}
\newcommand{\rules}{\mathcal{R}}
\newcommand{\grammar}{\mathcal{G}}
\newcommand{\wcfgtuple}{(\nonterminals, \alphabet, \start, \rules, p)}
\newcommand{\pprefix}{\pi}
\newcommand{\ppre}[3]{\pprefix({#1},\nt{#3},{#2})}
\newcommand{\plc}{\xi}
\newcommand{\pinside}{\beta}
\newcommand{\pins}[3]{\pinside({#1},\nt{#3},{#2})}

\newcommand{\vzero}{\mathbf{0}}
\newcommand{\vone}{\mathbf{1}}
\newcommand{\seta}{\mathcal{A}}
\newcommand{\semiring}{\mathcal{W}}
\newcommand{\semiringtuple}{\langle \seta, \oplus, \otimes, \vzero, \vone \rangle}

\newcommand{\matA}{A}
\newcommand{\matB}{B}

\newcommand{\matP}{P}

\newcommand{\matI}{I}
\newcommand{\matM}{M}
\newcommand{\mplus}{+}

\newcommand{\mzero}{\mathbf{O}}
\newcommand{\mone}{\mathbf{I}}

\newcommand{\treesum}{\text{treesum}}

\newcommand{\scol}[1]{\color{black}{#1}} 
\newcommand{\xcol}[1]{\color{Orange}{#1}} 
\newcommand{\ycol}[1]{\color{LimeGreen}{#1}} 
\newcommand{\zcol}[1]{\color{Dandelion}{#1}} 
\newcommand{\jcol}[1]{\color{Plum}{#1}} 
\newcommand{\gcol}[1]{\color{red}{#1}} 
\newcommand{\dcol}[1]{\color{blue}{#1}}

\usepackage{todonotes}

\setlength\titlebox{3cm}

\title{A Fast Algorithm for Computing Prefix Probabilities}

\author{Franz Nowak~\qquad~Ryan Cotterell \\
  \{\href{mailto:franz.nowak@inf.ethz.ch}{\texttt{franz.nowak}}\texttt{, }\href{mailto:ryan.cotterell@inf.ethz.ch}{\texttt{ryan.cotterell}}\}\texttt{@inf.ethz.ch}\\
  \fcolorbox{white}{white}{
  \includegraphics[width=.14\linewidth]{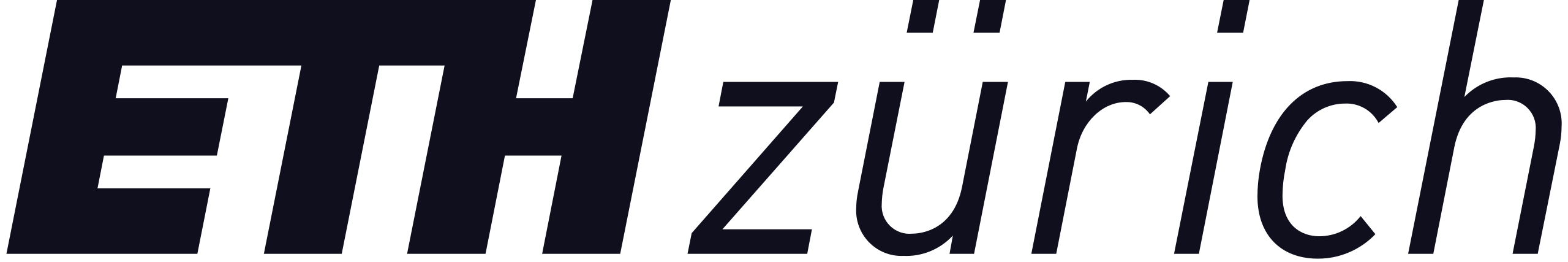}}
  }





\usepackage[pages=some]{background}
\usepackage{datetime}
\usepackage{etoolbox}
\usepackage{eso-pic}

\newcommand{\FirstPage}{57}
\newcommand{\LastPage}{67}
\newcommand{\Pages}[2]{\ifthenelse{\equal{#1}{#2}}{page #1}{pages #1--#2}}

\newcommand{\FirstLine}{Proceedings of the 61st Annual Meeting of the Association for Computational Linguistics
}
\newcommand{\SecondLine}{Volume 2: Short Papers,}
\newcommand{\ThirdLine}{July 9-14, 2023 \textcopyright2023 Association for Computational Linguistics}


\setcounter{page}{\FirstPage}
\thispagestyle{plain}
\pagestyle{plain}
\setlength\footskip{0.7cm}

\newcounter{Yloc}
\setcounter{Yloc}{14}

\newcommand{\citeinfo}[2]{
  \AddToShipoutPicture*{%
    \setlength{\unitlength}{1mm}
    \ifdefempty{\FirstLine}{}{
    \put(105,\value{Yloc}){\makebox(0,0){\footnotesize { \FirstLine}}} %
    \addtocounter{Yloc}{-4}
    }
    \put(105,\value{Yloc}){\makebox(0,0){\footnotesize { \SecondLine} \em ~\Pages{\FirstPage}{\LastPage}}}
    \addtocounter{Yloc}{-4}
    \put(105,\value{Yloc}){\makebox(0,0){\footnotesize \em \ThirdLine}}
  }
}

\citeinfo{\FirstPage}{\LastPage}


\begin{document}
\maketitle
\begin{abstract}
Multiple algorithms are known for efficiently calculating the prefix probability of a string under a probabilistic context-free grammar (PCFG).
Good algorithms for the problem have a runtime cubic in the length of the input string.
However, some proposed algorithms are suboptimal with respect to the size of the grammar. 
This paper proposes a novel speed-up of \citeposs{jelinek-lafferty-1991-computation} algorithm,
whose original runtime is $\bigO{N^3 |\nonterminals|^3 + |\nonterminals|^4}$, where $N$ is the input length and $|\nonterminals|$ is the number of non-terminals in the grammar.
In contrast, our speed-up runs in $\bigO{N^2 |\nonterminals|^3+N^3|\nonterminals|^2}$.

\vspace{0.5em}
{\includegraphics[width=1.25em,height=1.25em]{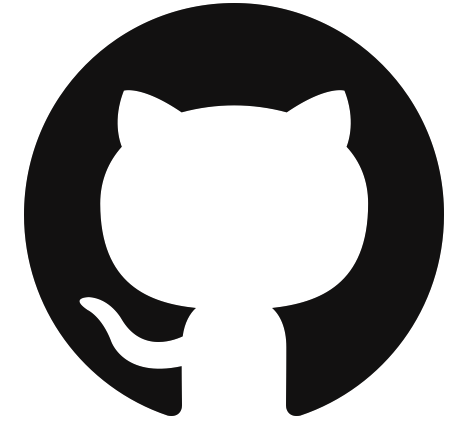}\hspace{1em}\parbox{\dimexpr\linewidth-2\fboxsep-2\fboxrule}{\url{https://github.com/rycolab/prefix-parsing}}}
\end{abstract}

\section{Introduction}
Probabilistic context-free grammars (PCFGs) are an important formalism in NLP \citep[Chapter 10] {eisenstein2019introduction}.
One common use of PCFGs is to construct a language model.
For instance, PCFGs form the backbone of many neural language models, e.g., recurrent neural network grammars \citep[RNNGs;][]{dyer-etal-2016-recurrent, dyer-2017-neural, kim-etal-2019-unsupervised}.
However, in order to use a PCFG as a language model, one needs to be able to compute prefix probabilities, i.e., the probability that the yield of a derivation starts with the given string.
In notation, given a string $\str = w_1 \cdots w_N$, 
we seek the probability $p(\nt{S} \derives \str \cdots)$ where $\nt{S}$ is the distinguished start symbol of the grammar and $\derives$ is the closure over applications of derivation rules of the grammar.\footnote{Specifically, $\boldsymbol{\alpha}\derives\boldsymbol{\beta}$ means that there exists an $n \geq 0$ such that $\boldsymbol{\alpha}\underbrace{\Rightarrow \cdots \Rightarrow}_{n \text{ times}}\boldsymbol{\beta}$, where $\Rightarrow$ marks a derivation step.}
Our paper gives a more efficient algorithm for the simultaneous computation of the prefix probabilities of \textit{all} prefixes of a string $\str$ under a PCFG.\looseness=-1

The authors are aware of two existing efficient algorithms to compute prefix probabilities under a PCFG.\footnote{Upon publication of this work, the authors were made aware of two other algorithms for finding prefix probabilities in the special case of idempotent semirings (\citealt{corazza-1994, sanchez-benedi-1997}). See \cref{sec:appendix-semirings} for a discussion of prefix parsing under a semiring.}
The first is \citeposs{jelinek-lafferty-1991-computation} algorithm which is derived from CKY \citep{Kasami1965AnER, YOUNGER1967189, Cocke1969ProgrammingLA}
and, thus, requires the grammar to be in Chomsky normal form (CNF).
Jelinek--Lafferty runs in $\bigO{N^3 |\nonterminals|^3 + |\nonterminals|^4}$ time, where $N$ is the length of the input and $\nonterminals$ is the number of non-terminals of the grammar, slower than the $\bigO{N^3 |\nonterminals|^3}$ required for parsing with CKY, when the number of non-terminals $|\nonterminals|$ is taken into account.\looseness=-1

The second, due to \citet{stolcke-1995-efficient}, is derived from Earley parsing \citep{earley-1970-efficient} and can parse arbitrary PCFGs,\footnote{Note that Earley's and, by extension, Stolcke's algorithms also implicitly binarize the grammar during execution by using dotted rules as additional non-terminals.} with a runtime of $\bigO{N^3|\nonterminals|^3}$.
Many previous authors have improved the runtime of Earley's \citep[\textit{inter alia}]{graham-1980, leermakers-1992, moore-2000-time}, and \citet{opedal-et-al-2023} successfully applied this speed-up to computing prefix probabilities, achieving a runtime of $\mathcal{O}(N^3|\grammar|)$, where $|\grammar|$ is the size of the grammar, that is, the sum of the number of symbols in all production rules.
\looseness=-1

Our paper provides a more efficient version of \citet{jelinek-lafferty-1991-computation} for the computation of prefix probabilities under a PCFG in CNF.
Specifically, we give an $\bigO{N^2 |\nonterminals|^3+N^3|\nonterminals|^2}$ time algorithm, which is the fastest attested in the literature for dense grammars in CNF,\footnote{A PCFG in CNF is dense if for every $\nt{X}, \nt{Y}, \nt{Z} \in \nonterminals$, we have a production rule $\nt{X} \rightarrow \nt{Y}\,\nt{Z} \in \rules$.} matching the complexity of CKY adapted for dense grammars by \citet{eisner-blatz-2007}.\footnote{Note that there exist approximate parsing algorithms with lower complexity bounds \citep{cohen-etal-2013-approximate}. Moreover, there are parsing algorithms that asymptotically run in sub-cubic time in the input length using fast matrix multiplication \citep{valiant-1975, benedi-et-al-2007}. However, they are of limited practical use \citep{lee-1997-fast}.}
We provide a full derivation and proof of correctness, as well as an open-source implementation on GitHub. We also briefly discuss how our improved algorithm can be extended to work for semiring-weighted CFGs.

\section{Preliminaries}
We start by introducing the necessary background on probabilistic context-free grammars.\looseness=-1
\begin{definition}
A \defn{probabilistic context-free grammar} (PCFG) is a five-tuple $\grammar=\wcfgtuple$, made up of:
\begin{itemize}
    \itemsep0em 
    \item A finite set of non-terminal symbols $\nonterminals$;
    \item A finite set of terminal symbols $\alphabet$, $\alphabet\cap\nonterminals{=}\emptyset$;
    \item A distinguished start symbol $\start\in\nonterminals$;
    \item A finite set of production rules $\rules \subset \nonterminals\times\kleene{\left(\nonterminals \cup \alphabet\right)}$ where each rule is written as $\nt{X}\xrightarrow{}\valpha$ with $\nt{X}\in\nonterminals$ and $\valpha\in\kleene{\left(\nonterminals \cup \alphabet\right)}$. Here, $\kleene{}$ denotes the Kleene closure;
    \item A weighting function $p \colon \rules \rightarrow [0, 1]$ assigning each rule $r\in\rules$ a probability such that $p$ is \defn{locally normalized}, meaning that for all $\nt{X}\in\nonterminals$ that appear on the left-hand side of a rule, $\sum\limits_{\nt{X}\xrightarrow{}\valpha\in\rules}p(\nt{X}\xrightarrow{}\valpha)=1$.
\end{itemize}
\end{definition}

Note that not every locally normalized PCFG constitutes a valid distribution over $\kleene{\alphabet}$.
Specifically, some may place probability mass on infinite trees \citep{chi-geman-1998-estimation}.
PCFGs that \emph{do} constitute a valid distribution over $\kleene{\alphabet}$ are referred to as \defn{tight}.
Furthermore, if all non-terminals of the grammar can be reached from the start non-terminal via production rules, we say the PCFG is \defn{trim}.

\begin{definition}
    A PCFG  $\grammar=\wcfgtuple$ is in \defn{Chomsky normal form} (CNF) if each production rule in $\rules$ is in one of the following forms:
\begin{align}
    &\nt{X}\xrightarrow{}\nt{Y}\,\nt{Z}\\
    &\nt{X}\xrightarrow{}a \\
    &\nt{S}\xrightarrow{}\varepsilon
\end{align}
where $\nt{X},\nt{Y},\nt{Z}\in\nonterminals$ such that $\nt{Y},\nt{Z} \neq \nt{S}$, $a\in\alphabet$, and $\varepsilon$ is the empty string.%
\footnote{Note that any PCFG can be converted to an equivalent PCFG in CNF \citep{smith-johnson-2007-weighted}.}
\end{definition}

\begin{definition}\label{def:derivation-step}
    A \defn{derivation step} $\boldsymbol{\alpha}\Rightarrow\boldsymbol{\beta}$ is an application of the binary relation $\Rightarrow:\kleene{(\nonterminals\cup\alphabet)}\times\kleene{(\nonterminals\cup\alphabet)}$, which rewrites the left-most non-terminal in $\boldsymbol{\alpha}$ according to a rule in $\rules$ from the left-hand side of that rule to its right-hand side, resulting in $\boldsymbol{\beta}$.
    The probability of a derivation step is the probability of the applied rule: $p(\boldsymbol{\alpha}\nt{X}\boldsymbol{\gamma}\Rightarrow\boldsymbol{\alpha}\boldsymbol{\beta}\boldsymbol{\gamma})\defeq p(\nt{X}\xrightarrow{}\boldsymbol{\beta})$.\looseness=-1
\end{definition}

\begin{definition}\label{def:derivation}
    A \defn{derivation} under a grammar $\grammar$ is a sequence $\boldsymbol{\alpha}_0, \boldsymbol{\alpha}_1, \cdots, \boldsymbol{\alpha}_m$, where $\boldsymbol{\alpha}_0\in\nonterminals, \boldsymbol{\alpha}_1, \cdots, \boldsymbol{\alpha}_{m-1} \in {\kleene{(\nonterminals\cup\alphabet)}}$, and $\boldsymbol{\alpha}_m \in \kleene{\alphabet}$, in which each $\boldsymbol{\alpha}_{i+1}$ is formed by applying a derivation step to $\boldsymbol{\alpha}_i$. $\boldsymbol{\alpha}_m=w_1 \cdots w_N\in\kleene{\Sigma}$ is called the \defn{yield} of the derivation. 
    If $\boldsymbol{\alpha}_0$ is not the start symbol $\nt{S}$, we call it a \defn{partial derivation}.
    We define $p(\boldsymbol{\alpha}\derives\boldsymbol{\beta})$ as the sum of probabilities of all subsequences from $\boldsymbol{\alpha}$ to $\boldsymbol{\beta}$, where each subsequence probability is the product of the individual derivation step probabilities defined in \Cref{def:derivation-step}.
\end{definition}
We represent derivations as trees whose structure corresponds to production rules, where any parent node is the non-terminal on the left-hand side of a rule and its children are the symbols from the right-hand side. The leaves of the tree, when read from left to right, form the yield. 
Such a tree, when rooted $\nt{S}$, is called a \defn{derivation tree}. Otherwise, it is called a \defn{derivation subtree}.
\begin{definition}
    The probability of a derivation tree (or derivation subtree) $\tree$ is the product of the probabilities of all its corresponding production rules:
    \begin{equation}
        p(\tree) \defeq \prod\limits_{(\boldsymbol{\alpha}\xrightarrow{}\boldsymbol{\beta})\in\tree}p(\boldsymbol{\alpha}\xrightarrow{}\boldsymbol{\beta})
    \end{equation}
\end{definition}
\vspace{-5pt}
\begin{definition}
$\trees{X}{w_i \cdots w_k}$ is the set of all derivation subtrees $\tree$ rooted at $\nt{X}$ with yield $w_i \cdots w_k$.\looseness=-1
\end{definition}

\begin{definition}
Given a PCFG $\grammar = \wcfgtuple$, a string $\str = w_1 \cdots w_N \in \kleene{\alphabet}$, and a non-terminal $\nt{X}\in\nonterminals$, the \defn{inside probability} of $\nt{X}$ between indices $i$ and $k$ (where $0 \leq i \leq k \leq N$) is defined as:\begin{align}
\pins{i}{k}{X} &\defeq p(\nt{X} \derives w_{i+1} \cdots w_k)\\
&= \sum_{\tree \in \trees{X}{w_{i+1} \cdots w_k}} p(\tree)
\end{align}
That is, the sum of the probabilities of all derivation trees $\tree$ starting at $X$ that have yield $w_i \cdots w_k$. 
\end{definition}

\begin{definition}
Given a PCFG $\grammar = \wcfgtuple$, a string $\str = w_1 \cdots w_N \in \kleene{\alphabet}$, and a non-terminal $\nt{X}\in\nonterminals$, we define the \defn{prefix probability} $\pprefix$, i.e., the probability of $\str$ being a prefix under $\grammar$, as:
\begin{align}
    \pprefix(\str\mid\nt{X}) \defeq \sum_{\boldsymbol{u} \in \kleene{\alphabet}}p(\nt{X} \derives \str \boldsymbol{u})    
\end{align}
\end{definition}
\vspace{-3pt}
In words, $\pprefix$ is the probability of deriving $\str$ with an arbitrary continuation from $\nt{X}$, that is, the sum of probabilities of deriving $\str\boldsymbol{u}$ from $\nt{X}$ over all possible suffixes $\boldsymbol{u}\in\kleene{\alphabet}$. 
In the following, we write the prefix probability of deriving prefix $\str = w_{i+1} \cdots w_k$ from $\nt{X}$ as $\ppre{i}{k}{X}$.

\begin{figure*}[ht!]
\centering
\begin{minipage}{0.48\textwidth}
\begin{algorithm}[H]
\caption{CKY}
\setstretch{0.86}
\label{alg:cky}
\begin{algorithmic}[1]
    \Func{CKY($\str=w_1 \cdots w_N$)}
    \LineComment{Initialize inside probabilities}
    \State $\pins{\cdot}{\cdot}{\cdot} \xleftarrow{} 0$
    \LineComment{Handle special rule, $\nt{S} \xrightarrow{} \varepsilon$}
    \State $\pins{0}{0}{\nt{S}} \xleftarrow{} p(\nt{S} \xrightarrow{} \varepsilon)$
    \For{$k\in 0,\dots,N-1$}
        \For{$\nt{X}\xrightarrow{}w_{k+1} \in \rules$}
            \LineComment{Handle single word tokens}
            \State $\pins{k}{k{+}1}{X}$ \texttt{+=} $p(\nt{X}\xrightarrow{}w_{k+1})$
        \EndFor
    \EndFor
    \LineComment{$\ell$ is the span size}
    \For{$\ell\in2,\dots,N$}\label{cky:main-loop}
        \LineComment{$i$ marks the beginning of the span}
        \For{$i\in0,\dots,N-\ell$}
            \LineComment{$k$ marks the end of the span}
            \State $k \xleftarrow{} i+\ell$
            \LineComment{Recursively compute $\pinside$}
            \For{$\nt{X}\xrightarrow{}\nt{Y\ Z}\in\rules$}
                \State $\pins{i}{k}{X}$ \texttt{+=} $p(\nt{X}{\xrightarrow{}}\nt{Y\,Z})$
                \Statex \quad\quad\quad\quad\quad $\cdot\sum\limits_{j=i+1}^{k-1}\pins{i}{j}{Y}\cdot\pins{j}{k}{Z}$
            \EndFor
        \EndFor
    \EndFor\label{cky:main-loop-end}
    \State \Return $\pinside$
\EndFunc
\end{algorithmic}
\end{algorithm}    
\end{minipage}
\hfill
\begin{minipage}{.5\textwidth}
\begin{algorithm}[H]
\caption{Jelinek--Lafferty}
\setstretch{0.77} 
\label{alg:lri}
\begin{algorithmic}[1]
    \State $\matP' \xleftarrow{}\left(\matI - \matP\right)^{-1}$ \InlineComment{Precompute $\kleene{\matP}$ by \cref{eq:pl}}\label{jl:pstar}
    \For {$\nt{X}_i,\nt{X}_j\in\nonterminals$} \InlineComment{Assign $\plc(\nt{Y}\mid X)$}\label{jl:init-xi}
            \State $\plc(\nt{X}_j\mid \nt{X}_i)\xleftarrow{} \matP'_{ij}$
    \EndFor\label{jl:init-xi-end}
        \For {$\nt{X'}\xrightarrow{}\nt{Y\ Z}\in\rules$} \InlineComment{Precompute $\plc(\nt{Y\,Z}\mid X)$}\label{jl:unfold-xi}
        \State $\plc(\nt{Y}\,\nt{Z} \mid \nt{X}){\xleftarrow{}}\sum\limits_{\nt{X}\in\nonterminals} \plc(\nt{X'} \mid \nt{X}) \cdot p(\nt{X'}{\xrightarrow{}}\nt{Y}\,\nt{Z}) $
    \EndFor\label{jl:unfold-xi-end}
    \Func{JL($\str=w_1 \cdots w_N$)}
    \State $\ppre{\cdot}{\cdot}{\cdot} \xleftarrow{} 0$ \InlineComment{Initialize prefix probabilities}\label{jl:init-pi}
    \For{$k \in 0,\dots, N$}
        \For{$\nt{X} \in \nonterminals$} \InlineComment{Prefix probability of $\varepsilon$}
            \State $\ppre{k}{k}{X}\xleftarrow{}1$
        \EndFor
    \EndFor\label{jl:end-init-pi}
    \State $\pinside \xleftarrow{} \text{CKY}(\str)$ \InlineComment{Compute $\pinside$ with \cref{alg:cky}}\label{jl:beta}
    \For{$k \in 0,\dots, N-1$}\label{jl:base-case}
        \For{$\nt{X} \in \nonterminals$} \InlineComment{Compute base case}
            \State $\pprefix(k{,}\nt{X}{,}k{+}1){\xleftarrow{}}\sum\limits_{\nt{Y}{\in}\nonterminals}\plc(\nt{Y} {\mid} \nt{X}){\cdot} p(\nt{Y}{\xrightarrow{}}w_{k+1})$
        \EndFor
    \EndFor\label{jl:base-case-end}
    \For{$\ell \in 2 \hdots N$}\label{jl:main-loop}
        \For{$i \in 0 \hdots N-\ell$}
            \State $k\xleftarrow{}i+\ell$
            \For{$\nt{X,Y,Z}\in\nonterminals$} \InlineComment{Recursively compute $\pprefix$}
                \State $\ppre{i}{k}{X}$ \texttt{+=} $\plc(\nt{Y\ Z} \mid \nt{X})$
                \Statex \quad\quad\quad\quad\quad $\cdot\sum\limits_{j=i+1}^{k-1}\pins{i}{j}{Y}\cdot \ppre{j}{k}{Z}$
            \EndFor
        \EndFor
    \EndFor\label{jl:main-loop-end}
    \State\Return $\pprefix$
\EndFunc
\end{algorithmic}
\end{algorithm}
\end{minipage}
\caption{Pseudocode for the CKY algorithm (left) and Jelinek--Lafferty (right)}
\end{figure*}

\begin{definition}
Let $\grammar$ be a PCFG in CNF. Then for non-terminals $\nt{X}, \nt{Y}, \nt{Z} \in \nonterminals$, the \defn{left-corner} expectations $\plc(\nt{Y} \mid \nt{X})$ and $\plc(\nt{Y}\,\nt{Z} \mid \nt{X})$ are defined as: 
\begin{align}
\plc(\nt{Y} \mid \nt{X}) &\defeq \sum_{\boldsymbol{u}\in\kleene{\alphabet}} p(\nt{X} \derives \nt{Y} \boldsymbol{u})\label{eq:plc_1} \\
\plc(\nt{Y}\,\nt{Z} \mid \nt{X}) &\defeq \sum_{\nt{X'}\in\nonterminals} \plc(\nt{X'} \mid \nt{X}) \cdot p(\nt{X'}{\xrightarrow{}}\nt{Y}\,\nt{Z})  \label{eq:plc_2}
\end{align}
\end{definition}

\usetikzlibrary{decorations.pathmorphing}
\begin{figure}[H]{}
    \centering
    \begin{subfigure}{0.24\textwidth}
        \centering
        \begin{tikzpicture}[]
            \node(0){X}
                child[missing]{}
                child{node{$\cdots$}};
            \begin{scope}[yshift=-0.8in,xshift=-0.4in]
            \node(1){Y};
            \end{scope}
            \draw [->,decorate,decoration={snake,amplitude=.4mm,segment length=2mm,post length=1mm}]
            (0) -- (1);
            \begin{scope}[yshift=-0.9in,xshift=-0.4in]
            \draw (0,0) node{}
              -- (-0.5,-0.5) node{}
              -- (0.5,-0.5) node{}
              -- cycle;
            \end{scope}
        \end{tikzpicture}    
        \caption{\normalsize$\plc(\nt{Y}\mid\nt{X})$}
        \label{fig:leftcorner1}
    \end{subfigure}
    \begin{subfigure}{0.23\textwidth}
    \centering
        \begin{tikzpicture}[]
            \node(0){X}
                child[missing]{}
                child{node{$\cdots$}};
            \begin{scope}[yshift=-0.45in,xshift=-0.23in]
            \node(1){$\nt{X'}$}
            child{node(2){Y}}
            child{node(3){Z}};
            \end{scope}
            \draw [->,decorate,decoration={snake,amplitude=.4mm,segment length=2mm,post length=1mm}]
            (0) -- (1);
            \begin{scope}[yshift=-1.15in,xshift=-0.52in]
            \draw (0,0) node{}
              -- (-0.5,-0.5) node{}
              -- (0.5,-0.5) node{}
              -- cycle;
            \end{scope}
            \begin{scope}[yshift=-1.15in,xshift=0.05in]
            \draw (0,0) node{}
              -- (-0.5,-0.5) node{}
              -- (0.5,-0.5) node{}
              -- cycle;
            \end{scope}
        \end{tikzpicture}    
        \caption{\normalsize$\plc(\nt{Y}\,\nt{Z}\mid\nt{X})$}
        \label{fig:leftcorner2}
    \end{subfigure}
    \caption{Visualization of left-corner expectations}
\end{figure}
\vspace{-11pt}
The left-corner expectation $\plc(\nt{Y}\mid\nt{X})$ is hence the sum of the probabilities of partial derivation subtrees rooted in $\nt{X}$ that have $\nt{Y}$ as the left-most leaf; see \cref{fig:leftcorner1} for a visualization. 
Similarly, $\plc(\nt{Y}\,\nt{Z}\mid\nt{X})$ is the sum of the probabilities of partial derivation subtrees that have $\nt{Y}$ and $\nt{Z}$ as the leftmost leaves; see \cref{fig:leftcorner2}.\looseness=-1

\section{\citet{jelinek-lafferty-1991-computation}}
We now give a derivation of the Jelinek--Lafferty algorithm.
The first step is to derive an expression for the prefix probability in PCFG terms.
\begin{restatable}{lemma}{recursion}
    \label{lem:ppre}
    Given a tight, trim PCFG in CNF and a string $\str = w_1 \cdots w_N$, the prefix probability of a substring $w_{i+1} \cdots w_k$ of $\str$ being derived from $\nt{X}$ can be defined recursively as follows:
\end{restatable}
\vspace{-11pt}
\begin{align}
    \begin{split}
        \ppre{i}{k}{X} &= \sum_{\nt{Y},\nt{Z}\in\nonterminals}\plc(\nt{Y}\,\nt{Z}\mid \nt{X})\\
      \quad  \cdot\,\sum_{j=i+1}^{k-1} &\pins{i}{j}{Y}\cdot\ppre{j}{k}{Z} \label{eq:ppre}
    \end{split}
\end{align}
\textit{Base case (for all $k\in0\dots N-1$ and all $\nt{X}\in\nonterminals$):}\looseness=-1
\begin{align}
    \ppre{k}{k{+}1}{X} = \sum\limits_{\nt{Y}{\in}\nonterminals}\plc(\nt{Y} {\mid} \nt{X}) {\cdot} p(\nt{Y}{\xrightarrow{}}w_{k+1}) \label{eq:base}
\end{align}
\noindent
\emph{Proof.} A proof of \Cref{lem:ppre} is given in \cref{sec:appendix}.\vspace{5pt}
\noindent
The above formulation of the prefix probability is closely related to that of the inside probability from \citeposs{Baker79} inside--outside algorithm, which can be efficiently computed using CKY, see \cref{alg:cky}.\looseness=-1

\noindent
Next, the left-corner expectations $\plc$ as defined by \cref{eq:plc_1} can be computed efficiently as follows.
Let $\matP$ denote the square matrix of dimension $|\nonterminals|$, with rows and columns indexed by the non-terminals $\nonterminals$ (in some fixed order), where the entry at the $i^{\text{th}}$ row and the $j^{\text{th}}$ column corresponds to $p(\nt{X}_i\xrightarrow{}\nt{X}_j\ \_)$, i.e., the probability of deriving $\nt{X}_j$ on the left corner from $\nt{X}_i$ in one step (we use $\_$ as a wildcard):
\begin{equation}
    p(\nt{X}_i\xrightarrow{}\nt{X}_j\ \_) \defeq \sum_{\nt{Y}\in\nonterminals} p(\nt{X}_i\xrightarrow{}\nt{X}_j\ \nt{Y})
\end{equation}
We can find the probability of getting to non-terminal $\nt{X}_j$ after $k$ derivation steps starting from $\nt{X}_i$ by multiplying $\matP$ with itself $k$ times:
\begin{equation}
p(\nt{X}_i\xrightarrow{k}\nt{X}_j\ \_)=(\matP^k)_{ij}
\end{equation}
We can hence get the matrix $\kleene{\matP}$, whose entries correspond to deriving $\nt{X}_j$ from $\nt{X}_i$ after \emph{any} number of derivation steps, by summing over all the powers of the matrix $\matP$:\footnote{Note that this sum converges if the PCFG is tight and trim since infinite derivation (sub)trees have zero probability mass.}
\begin{align}
    \kleene{\matP} &\defeq \matI + \matP + \matP^2 + \matP^3 + \cdots = \sum_{n=0}^\infty \matP^n \label{eq:pl}\\
    &= \matI + \matP \sum_{n=0}^\infty \matP^n = \matI + \matP\kleene{\matP} = (\matI - \matP)^{-1} \nonumber
\end{align}
Note that the entry at the $i^{\text{th}}$ row and $j^{\text{th}}$ column of $\kleene{\matP}$ is exactly the left-corner expectation $\plc(\nt{X}_j \mid \nt{X}_i)$.
Finally, we can compute the left-corner expectations $\plc(\nt{Y}\,\nt{Z}\mid\nt{X})$ using \cref{eq:plc_2}:
\begin{equation*}
\plc(\nt{Y}\,\nt{Z} \mid \nt{X}) \defeq \sum_{\nt{X'}\in\nonterminals} \plc(\nt{X'} \mid \nt{X}) \cdot p(\nt{X'}{\xrightarrow{}}\nt{Y}\,\nt{Z})    
\end{equation*}
Lastly, for completeness, we also compute the prefix probability of the empty string, $\varepsilon$. 
For probabilistic PCFGs, this probability is simply 1 because $\varepsilon$ is a prefix of any string:%
\footnote{For a generalization of the algorithm that does not require locally normalized rule weights, see \cref{sec:appendix-semirings}.}
\begin{align}
    \ppre{k}{k}{X} \defeq \pprefix(\varepsilon \mid \nt{X}) =  1
\end{align}

We can now combine the quantities derived above to obtain an efficient algorithm for the computation of prefix probabilities $\ppre{i}{k}{S}$.
For the full algorithm, see \cref{alg:lri}.

\begin{figure*}
\begin{subequations}
\begin{align}
\ppre{i}{k}{\scol{X}} &=  \sum_{\nt{\ycol{Y}},\nt{\zcol{Z}}\in\nonterminals}\plc(\nt{\ycol{Y}},\nt{\zcol{Z}}\mid \nt{\scol{X}})\cdot\sum_{{\jcol{j}}=i+1}^{k-1} \pins{i}{{\jcol{j}}}{\ycol{Y}}\cdot\ppre{{\jcol{j}}}{k}{\zcol{Z}}\label{eq:defn}\\
&=\sum_{\nt{\ycol{Y}},\nt{\zcol{Z}}\in\nonterminals}\sum_{\nt{\xcol{X'}}\in\nonterminals}
\plc(\nt{\xcol{X'}}\mid\nt{\scol{X}})\cdot p(\nt{\xcol{X'}}\xrightarrow{}\nt{\ycol{Y}}\,\nt{\zcol{Z}})
\cdot\sum_{{\jcol{j}}=i+1}^{k-1}\pins{i}{{\jcol{j}}}{\ycol{Y}}\cdot\ppre{{\jcol{j}}}{k}{\zcol{Z}}\label{eq:expand}\\
&=\sum_{\nt{\xcol{X'}},\nt{\zcol{Z}}\in\nonterminals}\plc(\nt{\xcol{X'}}\mid\nt{\scol{X}})\cdot \sum_{{\jcol{j}}=i+1}^{k-1}{\gcol{\gamma}}_{i{\jcol{j}}}(\nt{\xcol{X'}},\nt{\zcol{Z}})\cdot\ppre{{\jcol{j}}}{k}{\zcol{Z}}\label{eq:factor}\\
&= \sum_{\nt{\zcol{Z}}\in\nonterminals}\sum_{{\jcol{j}}=i+1}^{k-1}  {\dcol{\delta}}_{i{\jcol{j}}}(\nt{{\scol{X}}},\nt{\zcol{Z}})\cdot\ppre{{\jcol{j}}}{k}{\zcol{Z}}
\label{eq:fast_ppre}\\
\text{where } {\gcol{\gamma}}_{i{\jcol{j}}} (\nt{\xcol{X'}}&,\nt{\zcol{Z}})\defeq\sum_{\nt{\ycol{Y}}\in\nonterminals}p(\nt{\xcol{X'}}\xrightarrow{}\nt{\ycol{Y}}\,\nt{\zcol{Z}})\cdot \pins{i}{\jcol{j}}{\ycol{Y}}\label{eq:gamma}\\
    \text{ and } {\dcol{\delta}}_{i{\jcol{j}}}(\nt{{\scol{X}}}&,\nt{\zcol{Z}})\defeq\sum_{\nt{\xcol{X'}}\in\nonterminals}\plc(\nt{\xcol{X'}}\mid\nt{\scol{X}}) \cdot {\gcol{\gamma}}_{i{\jcol{j}}}(\nt{\xcol{X'}},\nt{\zcol{Z}})\label{eq:delta}
\end{align}
\end{subequations}
\label{fig:speedup}
\end{figure*}

\begin{proposition}
    The time complexity of the CKY algorithm as presented in \cref{alg:cky} is $\bigO{N^3|\nonterminals|^3}$. \label{prop:cky}
\end{proposition}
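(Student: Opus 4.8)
The plan is to bound the cost of \cref{alg:cky} by summing the work done in each of its two main phases. First I would dispatch the initialization loop: for each of the $N$ positions $k$, the inner loop ranges over rules of the form $\nt{X} \to w_k$, of which there are at most $|\rules| \leq |\nonterminals| \cdot |\alphabet|$ but, more usefully, at most $|\nonterminals|$ distinct left-hand sides per terminal; in any case this phase costs $\bigO{N|\nonterminals|}$ (or $\bigO{N|\rules|}$), which is dominated by the main phase and can be absorbed.

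The main obstacle — really the only substantive step — is counting the work in the triply-nested loop over $\ell$, $i$, and the binary rules, together with the inner sum over the split point $j$. I would argue as follows. The span-length loop runs $\ell = 2, \dots, N$ and for each $\ell$ the start-index loop runs over at most $N$ values of $i$, so the pair $(\ell, i)$ takes $\bigO{N^2}$ values; equivalently, one may think of this as iterating over all $\bigO{N^2}$ spans $(i,k)$ with $i \le k$. For each fixed span, the loop over binary rules $\nt{X} \to \nt{Y}\,\nt{Z} \in \rules$ has at most $|\rules|$ iterations, and for a dense grammar in CNF $|\rules| = \Theta(|\nonterminals|^3)$; for each such rule the inner summation over $j$ from $i$ to $k-1$ involves at most $N$ terms, each a single multiplication of two previously-computed inside probabilities. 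Multiplying these factors gives $\bigO{N^2} \cdot \bigO{|\nonterminals|^3} \cdot \bigO{N} = \bigO{N^3 |\nonterminals|^3}$ for the main phase.

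Finally I would combine the two phases: the total is $\bigO{N|\nonterminals|} + \bigO{N^3|\nonterminals|^3} = \bigO{N^3|\nonterminals|^3}$, which is the claimed bound. I expect no genuine difficulty here beyond being careful that the $j$-loop contributes a factor of $N$ (not a constant) and that the rule loop contributes $|\nonterminals|^3$ rather than $|\nonterminals|^2$, since it is precisely the interaction of these two factors with the $\bigO{N^2}$ span count that produces the cubic-in-$N$, cubic-in-$|\nonterminals|$ complexity. It is also worth noting in passing that this counting is tight for dense grammars, which motivates the speed-up developed in the remainder of the paper.
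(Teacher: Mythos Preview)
Your argument is correct and follows essentially the same approach as the paper: identify the nested loops in the main phase, observe that iterating over spans contributes $\bigO{N^2}$, the split point $j$ contributes another factor of $N$, and the rule loop contributes $|\rules| = \bigO{|\nonterminals|^3}$ for a CNF grammar, yielding $\bigO{N^3|\nonterminals|^3}$. Your treatment is somewhat more explicit than the paper's (you separately bound and absorb the initialization phase, and you unpack the span loops more carefully), but the substance is identical.
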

\begin{proof}
Clearly, the computationally critical part is in \crefrange{cky:main-loop}{cky:main-loop-end}, where we iterate over all indices of $\str$ for $i$, $j$, and $k$, as well as over the whole set of grammar rules, thus taking $\bigO{N^3|\rules|}$.  In a PCFG in CNF, with the size of $\alphabet$ taken as constant, the number of rules, $|\rules|$, is $\bigO{|\nonterminals|^3}$, making the overall complexity of CKY $\bigO{N^3|\nonterminals|^3}$.
\end{proof}
\begin{proposition}
The total time complexity of Jelinek--Lafferty is $\bigO{N^3|\nonterminals|^3 + |\nonterminals|^4}$:
\end{proposition}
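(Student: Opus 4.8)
The plan is to walk through \cref{alg:lri} block by block, bound the cost of each, and take the maximum. The initialization of the prefix table on line~3 costs only $\bigO{N^2|\nonterminals|}$, and the call to CKY costs $\bigO{N^3|\nonterminals|^3}$ by \cref{prop:cky}. The loop that fills in the left-corner expectations $\plc(\nt{X}_j\mid\nt{X}_i)$ inverts the $|\nonterminals|\times|\nonterminals|$ matrix $\matI-\matP$, which standard cubic-time matrix inversion does in $\bigO{|\nonterminals|^3}$; since $N\geq 1$ this is dominated by the CKY term. The base-case loop ranges over positions $k$, over non-terminals $\nt{X}$, and in its inner sum over non-terminals $\nt{Y}$, so it costs $\bigO{N|\nonterminals|^2}$ and is likewise dominated.

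Next I would isolate the two blocks that produce the two summands in the claimed bound. The loop computing $\plc(\nt{Y}\,\nt{Z}\mid\nt{X})$ via \cref{eq:plc_2} iterates over every binary rule $\nt{X'}\to\nt{Y}\,\nt{Z}$ --- of which a CNF grammar has $\bigO{|\nonterminals|^3}$ --- and for each one sums over all $\nt{X}\in\nonterminals$, giving $\bigO{|\nonterminals|^4}$. The main recursion is the other critical block: the outer loops over the span length $\ell$ and the start index $i$ together contribute $\bigO{N^2}$, the loop over $\nt{X},\nt{Y},\nt{Z}\in\nonterminals$ contributes $\bigO{|\nonterminals|^3}$, and the inner sum over the split point $j$ contributes a further $\bigO{N}$, for $\bigO{N^3|\nonterminals|^3}$ in total. (A sharper count via $\sum_{\ell}(N-\ell+1)(\ell-1)=\bigO{N^3}$ gives the same order, so the crude bound already suffices.)

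Summing over all blocks, every contribution is bounded by $\bigO{N^3|\nonterminals|^3}$ or $\bigO{|\nonterminals|^4}$, so the total runtime is $\bigO{N^3|\nonterminals|^3+|\nonterminals|^4}$, which is the claim. The only step needing care is the bound on the $\plc(\nt{Y}\,\nt{Z}\mid\nt{X})$ loop: one must notice that pairing the $\Theta(|\nonterminals|^3)$ binary rules with the inner summation over the left-hand non-terminal genuinely yields a quartic term rather than a cubic one. That extra factor of $|\nonterminals|$ is precisely the source of Jelinek--Lafferty's suboptimality relative to CKY, and is exactly what the refactoring in \cref{fig:speedup} is designed to eliminate.
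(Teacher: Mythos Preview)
Your proof is correct and follows essentially the same block-by-block decomposition as the paper's own argument: bound CKY by $\bigO{N^3|\nonterminals|^3}$, the matrix inversion by $\bigO{|\nonterminals|^3}$, the precomputation of $\plc(\nt{Y}\,\nt{Z}\mid\nt{X})$ by $\bigO{|\nonterminals|^4}$, the base case by $\bigO{N|\nonterminals|^2}$, and the main recursion by $\bigO{N^3|\nonterminals|^3}$. The only differences are cosmetic---you additionally account for the trivial initialization cost and add a closing remark about the origin of the quartic term---so the two proofs are substantively identical.
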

\begin{proof}
First, we precompute any values that are independent of the input. In \crefrange{jl:pstar}{jl:init-xi-end}, we precompute all the left-corner expectations $\plc(\nt{Y}\mid\nt{X})$ using \cref{eq:pl}, which has the complexity of inverting the matrix $\matP$, i.e., $\bigO{|\nonterminals|^3}$, and move the values into a map of left-corner expectations in $\bigO{|\nonterminals|^2}$ (this is just for readability). 
In \crefrange{jl:unfold-xi}{jl:unfold-xi-end}, we then use \cref{eq:plc_2} to compute $\plc(\nt{Y}\,\nt{Z}\mid\nt{X})$, iterating once over all non-terminals $\nt{X}$ for each rule, which takes $\bigO{|\rules||\nonterminals|}$, that is, $\bigO{|\nonterminals|^4}$.
After initializing the probabilities in \crefrange{jl:init-pi}{jl:end-init-pi} in $\bigO{N^2|\nonterminals|}$, we begin by pre-computing all the inside probabilities $\pinside$ in \cref{jl:beta} of \cref{alg:lri}, which takes $\bigO{N^3|\nonterminals|^3}$ by \cref{prop:cky}.
Computing $\ppre{k}{k+1}{X}$ for all $\nt{X}\in\nonterminals$ by \cref{eq:base} in \crefrange{jl:base-case}{jl:base-case-end} takes $\bigO{N|\nonterminals|^2}$ as we iterate over all positions $k\in N$ and over all $\nt{Y}\in\nonterminals$ for each $\nt{X}\in\nonterminals$.
And finally, computing the $\pprefix$ chart in \crefrange{jl:main-loop}{jl:main-loop-end} takes $\bigO{N^3|\nonterminals|^3}$ since we iterate over all $\ell, i, j \leq N$ and $\nt{X}, \nt{Y}, \nt{Z} \in \nonterminals$. 
This yields an overall time complexity of $\bigO{N^3|\nonterminals|^3 + |\nonterminals|^4}$.
\end{proof}

\section{Our Speed-up}
We now turn to our development of a faster dynamic program to compute all prefix probabilities.

\noindent
The speed-up comes from a different way to factorize $\ppre{i}{k}{X}$, which allows additional memoization. 
Starting with the definition of the prefix probability in \cref{eq:defn}, we first expand $\plc(\nt{Y}\,\nt{Z}\mid\nt{X})$ by \cref{eq:plc_2}, as seen in \cref{eq:expand}. 
Then, we factor out all terms that depend on the left-corner non-terminal $\nt{Y}$ in \cref{eq:factor}, which we store in a chart $\gamma$, see \cref{eq:gamma}. We then do the same for all terms depending on $\nt{X'}$, factoring them out in \cref{eq:fast_ppre} and storing them in another chart $\delta$, see \cref{eq:delta}.\looseness=-1

Our improved algorithm for computing all prefix probabilities is shown in \cref{alg:fast_lri}.

\begin{algorithm}[ht!]
\caption{Faster prefix probability algorithm}
\setstretch{0.9}
\label{alg:fast_lri}
\begin{algorithmic}[1]
    \State $\matP' \xleftarrow{} (\matI - \matP)^{-1}$ \InlineComment{Precompute $\kleene{\matP}$ with \cref{eq:pl}}\label{fjl:pstar}
    \For {$\nt{X}_i,\nt{X}_j\in\nonterminals$} \InlineComment{Assign $\plc(\nt{Y}\mid X)$}\label{fjl:init-xi}
            \State $\plc(\nt{X}_j\mid \nt{X}_i)\xleftarrow{} \matP'_{ij}$
    \EndFor\label{fjl:init-xi-end}
    \Func{FastJL($\str=w_1 \cdots w_N$)}
    \State $\ppre{\cdot}{\cdot}{\cdot} \xleftarrow{} 0$ \InlineComment{Initialize prefix probabilities}\label{fjl:init-pi}
    \For{$k \in 0,\dots, N$}
        \For{$\nt{X} \in \nonterminals$} \InlineComment{Prefix probability of $\varepsilon$}
            \State $\ppre{k}{k}{X}\xleftarrow{}1$
        \EndFor
    \EndFor\label{fjl:init-pi-end}
    \State $\pinside \xleftarrow{} \text{CKY}(\str)$\InlineComment{Compute $\pinside$ with \cref{alg:cky}}\label{fjl:beta}
    \For{$i,j=0,\dots, N$}\label{fjl:gamma-delta}
        \For{$\nt{X,Z}\in\nonterminals$}\InlineComment{Compute $\gamma$ by \cref{eq:gamma}}
            \State $\gamma_{ij}(\nt{X{,}Z}){\xleftarrow{}}\sum\limits_{\nt{Y}{\in}\nonterminals} p(\nt{X}{\xrightarrow{}}\nt{Y}\nt{Z}){\cdot}\pins{i}{j}{Y}$
        \EndFor
        \For{$\nt{X,Z}\in\nonterminals$}\InlineComment{Compute $\delta$ by \cref{eq:delta}}
            \State $\delta_{ij}(\nt{X{,}Z}){\xleftarrow{}}\sum\limits_{\nt{Y}\in\nonterminals}\plc(\nt{Y}\mid\nt{X})\cdot\gamma_{ij}(\nt{Y},\nt{Z})$
        \EndFor
    \EndFor\label{fjl:gamma-delta-end}
    \For{$k \in 0,\dots, N-1$}\label{fjl:base-case}
        \For {$\nt{X} \in \nonterminals$} \InlineComment{Compute base case}
            \State $\pprefix(k ,\nt{X}, k{+}1) \xleftarrow{} \sum\limits_{\nt{Y}\in\nonterminals}\plc(\nt{Y}\mid\nt{X})$
            \Statex \quad \quad \quad \quad $\cdot p(\nt{Y}\xrightarrow{}w_{k{+}1})$
        \EndFor
    \EndFor\label{fjl:base-case-end}
    \For{$\ell \in 2 \hdots N$}\label{fjl:main-loop}
        \For{$i \in 1 \hdots N-\ell$}
            \State $k\xleftarrow{}i+\ell$
            \For{$\nt{X, Z}\in\nonterminals$} \InlineComment{Recursively compute $\pprefix$}
                \State $\ppre{i}{k}{X}{\texttt{+=}}\sum\limits_{j={i+1}}^{k-1}\delta_{ij}( \nt{X},\nt{Z}){\cdot}\ppre{j}{k}{Z}$
            \EndFor
        \EndFor
    \EndFor\label{fjl:main-loop-end}
    \State\Return $\pprefix$
\EndFunc
\end{algorithmic}
\end{algorithm}

\begin{proposition}
    The complexity of our improved algorithm is $\bigO{N^2|\nonterminals|^3+N^3|\nonterminals|^2}$.
\end{proposition}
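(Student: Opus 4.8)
The plan is to bound the running time of \cref{alg:fast_lri} line by line, in execution order, and then collect dominant terms --- exactly as in the proof given for Jelinek--Lafferty. First, the precomputation of the inside probabilities $\pinside$ (line 3): I would \emph{not} charge the naive CKY of \cref{alg:cky}, whose $\bigO{N^3|\nonterminals|^3}$ cost would already swamp the target, but instead the dense-grammar variant of \citet{eisner-blatz-2007}, which applies the same associativity trick we use for $\pprefix$: for each endpoint pair $(i,j)$ precompute $\sum_{\nt{Y}} p(\nt{X}\xrightarrow{}\nt{Y}\,\nt{Z})\,\pinside(i,j\mid\nt{Y})$ over all $\nt{X},\nt{Z}$ in $\bigO{N^2|\nonterminals|^3}$ and then combine with the right child over the split point in $\bigO{N^3|\nonterminals|^2}$. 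This puts $\pinside$ within budget. Next, lines 4--6 invert $\matI-\matP$ to obtain all $\plc(\nt{X}_j\mid\nt{X}_i)$ in $\bigO{|\nonterminals|^3}$, which is absorbed into $N^2|\nonterminals|^3$.

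The third block, lines 7--13, computes $\gamma$ and $\delta$. For $\gamma$ the loop runs over all $\bigO{N^2}$ index pairs $(i,j)$, and for each over all $\bigO{|\nonterminals|^2}$ pairs $(\nt{X},\nt{Z})$, with each entry a sum over $\nt{Y}\in\nonterminals$; hence $\bigO{N^2|\nonterminals|^3}$. The $\delta$ loop has the identical shape (summing over the non-terminal playing the role of $\nt{X'}$ in \cref{eq:delta}), so again $\bigO{N^2|\nonterminals|^3}$. The base case, lines 14--16, costs $\bigO{N|\nonterminals|^2}$, absorbed into $N^3|\nonterminals|^2$.

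The fourth block, lines 17--23, is the main recursion and is where the factorization pays off. The loops over $\ell$ and $i$ range over $\bigO{N^2}$ span pairs $(i,k)$; for each we loop over $\nt{X},\nt{Z}\in\nonterminals$ and over the split point $j$, for $\bigO{N^2\cdot|\nonterminals|^2\cdot N}=\bigO{N^3|\nonterminals|^2}$. The one thing to verify carefully is that the body (lines 21--22) does $\bigO{1}$ work per $(\nt{X},\nt{Z},j)$ triple: since the sums over both $\nt{Y}$ and $\nt{X'}$ have been precomputed into $\delta_{ij}(\nt{X},\nt{Z})$, there is no inner non-terminal loop here --- in contrast to line 14 of \cref{alg:lri}, which iterates $\nt{X},\nt{Y},\nt{Z}$ simultaneously. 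That is precisely what drops this term from $\bigO{N^3|\nonterminals|^3}$ to $\bigO{N^3|\nonterminals|^2}$. Summing the contributions --- $\bigO{N^2|\nonterminals|^3}$ from $\pinside$, $\gamma$, $\delta$ and $\bigO{N^3|\nonterminals|^2}$ from $\pinside$ and the recursion --- yields $\bigO{N^2|\nonterminals|^3+N^3|\nonterminals|^2}$.

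The main obstacle I anticipate is line 3: one must be careful that the inside probabilities are obtained with the dense-CKY factorization rather than plain \cref{alg:cky}, since otherwise that single line already exceeds the claimed bound. Beyond that the argument is a routine loop count, the only genuinely substantive point being the claim just made that $\delta$ absorbs both non-terminal summations so that the innermost recursion step is constant-time.
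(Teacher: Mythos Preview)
Your proposal is correct and follows essentially the same approach as the paper's own proof: a line-by-line loop count, invoking \citet{eisner-blatz-2007} for the dense-CKY inside computation, $\bigO{|\nonterminals|^3}$ for the matrix inverse, $\bigO{N^2|\nonterminals|^3}$ for $\gamma$ and $\delta$, $\bigO{N|\nonterminals|^2}$ for the base case, and $\bigO{N^3|\nonterminals|^2}$ for the main recursion. Your write-up is in fact more explicit than the paper's in justifying why the Eisner--Blatz variant is needed and why the innermost recursion step is constant-time once $\delta$ has absorbed both non-terminal sums.
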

\begin{proof}
As before, \cref{alg:fast_lri} starts by precomputing and assigning the left-corner expectations in \crefrange{fjl:pstar}{fjl:init-xi-end}, which takes $\bigO{|\nonterminals|^3}$ and $\bigO{|\nonterminals|^2}$, respectively.
We then initialize the prefix probabilities and compute the inside probabilities in \crefrange{fjl:init-pi}{fjl:init-pi-end}, taking $\bigO{N^2|\nonterminals|}$.
As \citet{eisner-blatz-2007} show, one can compute $\pinside$ (\cref{fjl:beta}) in $\bigO{N^2|\nonterminals|^3 + N^3|\nonterminals|^2}$, thus improving the runtime of \cref{alg:cky} for dense grammars.
Pre-computing $\gamma$ and $\delta$ in \crefrange{fjl:gamma-delta}{fjl:gamma-delta-end} takes $\bigO{N^2|\nonterminals|^3}$, as we sum over non-terminals, and both charts each have two dimensions indexing $N$ and two indexing $\nonterminals$.
Computing the base case $\ppre{k}{k+1}{X}$ for all non-terminals $\nt{X}$ and positions $k$ in \crefrange{fjl:base-case}{fjl:base-case-end} takes $\bigO{N|\nonterminals|^2}$, as before.
Finally, the loops computing $\pprefix$ in \crefrange{fjl:main-loop}{fjl:main-loop-end} take $\bigO{N^3|\nonterminals|^2}$, as we are now iterating over $\nt{X},\nt{Z}\in\nonterminals$ and $\ell, i, j\leq N$.
Hence, our new overall time complexity is $\bigO{N^2|\nonterminals|^3+N^3|\nonterminals|^2}$.
\end{proof}

\section{Generalization to Semirings}
It turns out that Jelinek--Lafferty, and, by extension, our improved algorithm, can be generalized to work for semiring-weighted CFGs, 
with the same time complexity, under the condition that 
the semiring is closed, i.e., it has a Kleene star. 
This follows from the fact that the only operations used by the algorithm are addition and multiplication if we use \citeposs{lehmann-1977-algebraic} algorithm for the computation of left-corner expectations, $\plc$.
We can even relax the condition of local normalization by changing how left-corner expectations and the weight of the prefix $\varepsilon$ are computed.
The relevant definitions and derivation of the adapted algorithm can be found in \cref{sec:appendix-semirings}.\looseness=-1

\section{Conclusion}
In this paper, we have shown how to efficiently compute prefix probabilities for PCFGs in CNF, adapting Jelinek--Lafferty to use additional memoization, thereby reducing the time complexity from $\bigO{N^3|\nonterminals|^3 + |\nonterminals|^4}$ to $\bigO{N^2|\nonterminals|^3 + N^3|\nonterminals|^2}$. We thereby addressed one of the main limitations of the original formulation, of being slow for large grammar sizes.\looseness=-1

\section*{Limitations}
While we have improved the asymptotic running time of a classic algorithm with regard to grammar size, the time complexity of our algorithm is still cubic in the length of the input. Our result follows the tradition of dynamic programming algorithms that trade time for space by memoizing and reusing precomputed intermediate results. The usefulness of this trade-off in practice depends on the specifics of the grammar, and while the complexity is strictly better in terms of non-terminals, it will be most noticeable for denser grammars with many non-terminals.

\section*{Ethics Statement}
We do not foresee any ethical issues arising from this work.

\section*{Acknowledgements}
We thank the anonymous reviewers for their helpful comments and suggestions. 
We also extend our gratitude to Abra Ganz, Andreas Opedal, Anej Svete, and Tim Vieira for their valuable feedback on various versions of this paper.

\bibliography{anthology,custom}
\bibliographystyle{acl_natbib}

\onecolumn
\appendix

\section{Proof of \cref{lem:ppre}} \label{sec:appendix}
\recursion*
\vspace{-11pt}
\setcounter{equation}{9}
\begin{align}
    \ppre{i}{k}{X} = \sum_{\nt{Y},\nt{Z}\in\nonterminals}\plc(\nt{Y}\,\nt{Z}\mid \nt{X}) \cdot\sum_{j=i+1}^{k-1} \pins{i}{j}{Y}\cdot\ppre{j}{k}{Z}
\end{align}
\textit{Base case (for all $k\in0\dots N-1$ and all $\nt{X}\in\nonterminals$):}
\begin{align}
    \ppre{k}{k{+}1}{X} = \sum\limits_{\nt{Y}{\in}\nonterminals}\plc(\nt{Y} \mid \nt{X}) \cdot p(\nt{Y}{\xrightarrow{}}w_{k+1})
\end{align}
\setcounter{equation}{16}
\begin{proof}
\cref{eq:base}: The base case, $\ppre{k}{k{+}1}{X}$, is the probability of $w_{k+1}$ being the left-most terminal in the parse subtree under $\nt{X}$. 
It is, therefore, simply the sum of probabilities of any non-terminal $\nt{Y}$ being on the left corner of the parse subtree under $\nt{X}$ multiplied by the corresponding probability of $\nt{Y}$ directly deriving $w_{k+1}$.\looseness=-1

\cref{eq:ppre}:
Given the PCFG is in CNF and assuming $k>i+1$, in order to derive the prefix $w_{i+1} \cdots w_k$ we must first apply some rule $\nt{X}\xrightarrow{}\nt{Y}\,\nt{Z}$, where the first part of the substring is then derived from $\nt{Y}$ and the remainder (and potentially more) from $\nt{Z}$:
\begin{equation}
    \ppre{i}{k}{X} = \sum_{\nt{Y},\nt{Z}\in\nonterminals}p(\nt{X}\xrightarrow{}\nt{Y}\,\nt{Z})\, \left[\sum_{j=i+1}^{k-1}\pins{i}{j}{Y}\cdot\ppre{j}{k}{Z}+\ppre{i}{k}{Y}\right] \label{eq:ppre1}
\end{equation}
where the last term, $\ppre{i}{k}{Y}$, handles the case where the whole prefix is derived from $\nt{Y}$ alone. This term is clearly recursively defined through \cref{eq:ppre1}, with $\nt{X}$ replaced by $\nt{Y}$. 
Defining $R(\nt{Y},\nt{Z})\defeq\sum_{j=i+1}^{k-1}\pins{i}{j}{Y}\ppre{j}{k}{Z}$, we can rewrite \cref{eq:ppre1} as:
\begin{equation}
        \ppre{i}{k}{X}=\sum_{\nt{Y},\nt{Z}\in\nonterminals}p(\nt{X}\xrightarrow{}\nt{Y}\,\nt{Z})\cdot R(\nt{Y},\nt{Z})+\,\sum_{\nt{A},\nt{B}\in\nonterminals}p(\nt{X}\xrightarrow{}\nt{A}\,\nt{B})\cdot\ppre{i}{k}{A}
\end{equation}
After repeated substitutions ad infinitum, we get:
\begin{equation}
    \ppre{i}{k}{X}=\sum_{\nt{A},\nt{B}\in\nonterminals}p(\nt{X}\derives\nt{A}\,\nt{B})\,\sum_{\nt{Y},\nt{Z}\in\nonterminals}p(\nt{A}\xrightarrow{}\nt{Y}\,\nt{Z})\cdot R(\nt{Y},\nt{Z})
\end{equation}
Note that, in the last step, infinite derivations do not carry any probability mass since we assumed the PCFG to be tight and trim.
Hence, the final form of the equation is:
\begin{align}
    \begin{split}
        \ppre{i}{k}{X}&=\sum_{\nt{A},\nt{B}\in\nonterminals}p(\nt{X}\derives\nt{A}\,\nt{B})\sum_{\nt{Y},\nt{Z}\in\nonterminals}p(\nt{A}\xrightarrow{}\nt{Y}\,\nt{Z})\cdot R(\nt{Y},\nt{Z})\\
        &=\sum_{\nt{Y},\nt{Z}\in\nonterminals}\plc(\nt{Y}\,\nt{Z}\mid \nt{X})\sum_{j=i+1}^{k-1}\pins{i}{j}{Y}\cdot\ppre{j}{k}{Z}
    \end{split}
\end{align}
\end{proof}

\newpage
\section{Extension of \cref{alg:fast_lri} to Semirings} \label{sec:appendix-semirings}
In the following, we give the necessary background on semirings and then show how the algorithms introduced above can be framed in terms of semirings.

\subsection{Semirings}
We start by introducing the necessary definitions and notation.
\begin{definition}
    A \defn{monoid} is a 3-tuple $\langle \seta, \circ, \vone \rangle$ where:
    \begin{enumerate}[label=(\roman*)]
        \item $\seta$ is a non-empty set;
        \item $\circ$ is an associative binary operation: $\forall a, b, c \in \seta, (a \circ b) \circ c = a \circ (b \circ c)$;
        \item $\vone$ is a left and right identity element: $\forall a \in \seta, \vone \circ a = a \circ \vone = a$
        \item $\seta$ is closed under the operation $\circ$: $\forall a, b \in \seta, a \circ b \in \seta$
    \end{enumerate}
A monoid is \defn{commutative} if $\forall a,b \in \seta: a \circ b = b \circ a$. 
\end{definition}
\begin{definition}
A \defn{semiring} is a 5-tuple $\semiring = \semiringtuple$, where
    \begin{enumerate}[label=(\roman*)]
        \item $\langle \seta, \oplus, \vzero \rangle$ is a \defn{commutative monoid} over $\seta$ with identity element $\vzero$ under the \emph{addition} operation $\oplus$;
        \item $\langle \seta, \otimes, \vone \rangle$ is a \defn{monoid} over $\seta$ with identity element $\vone$ under the \emph{multiplication} operation $\otimes$;
        \item Multiplication is \defn{distributive} over addition, that is, $\forall a,b,c\in\seta$:
        \begin{itemize}
            \item $a \otimes (b\oplus c) = a \otimes b \oplus a \otimes c$;
            \item $(b\oplus c) \otimes a = b \otimes a \oplus c \otimes a$.
        \end{itemize}
        \item $\vzero$ is an \defn{annihilator} for $\seta$, that is, $\forall a \in \seta, \vzero \otimes a = a \otimes \vzero = \vzero$.
    \end{enumerate}
    A semiring is \defn{commutative} if $\langle\seta,\otimes,\vone\rangle$ is a commutative monoid.
    A semiring is \defn{idempotent} if $\forall a \in \seta: a \oplus a = a$.
\end{definition}

\begin{definition}
    A semiring $\semiring = \semiringtuple$ is \defn{complete} if it is possible to extend the addition operator $\oplus$ to infinite sums, maintaining the properties of associativity, commutativity, and distributivity from the finite case \citep[Chapter 9]{Rozenberg-1997}. 
    In this case, we can define the unary operation of the \defn{Kleene star} denoted by a superscript $\kleene{}$ as the infinite sum over powers of its operand, that is, $\forall a\in\seta$: 
    \begin{equation}
        \kleene{a} \defeq \bigoplus_{i=0}^\infty a^i 
    \end{equation}
\end{definition}
\noindent
Analogously to \cref{eq:pl}, it then follows that:
\begin{equation}
    \kleene{a} = \bigoplus_{i=0}^\infty a^i = a^0 \oplus \bigoplus_{i=1}^\infty a^i = \vone \oplus a \otimes \bigoplus_{i=0}^\infty a^i = \vone \oplus a \otimes \kleene{a} 
\end{equation}
and, similarly:
\begin{equation}
    \kleene{a} = a^0 \oplus \bigoplus_{i=1}^\infty a^i = \vone \oplus \bigoplus_{i=0}^\infty a^i \otimes a = \vone \oplus \kleene{a} \otimes a
\end{equation}

We now discuss how complete semirings can be lifted to square matrices. The definitions follow analogously to matrices over the real numbers. 
\begin{definition}
    We define \defn{semiring matrix addition} as follows. 
    Let $\matA$ and $\matB$ be $d \times d$ matrices whose entries are elements from a complete semiring $\semiring = \semiringtuple$. Then the sum ("$+$") of $\matA$ and $\matB$ is defined as:
    \begin{equation}
        (\matA \mplus \matB)_{ij} \defeq \matA_{ij} \oplus \matB_{ij} \hspace{0.5in} i,j \in 1,\dots, d
    \end{equation}
\end{definition}

\begin{definition}
    We define \defn{semiring matrix multiplication} as follows. Let $\matA$ and $\matB$ be $d \times d$ matrices whose entries are elements from a complete semiring $\semiring = \semiringtuple$. 
    Then the product of $\matA$ and $\matB$  is defined as:
    \begin{equation}
        (\matA\matB)_{ij} \defeq \bigoplus_{k=1}^d \matA_{ik} \otimes \matB_{kj} \hspace{0.5in} i,j \in 1,\dots, d
    \end{equation}
\end{definition}

We also define the \defn{zero matrix}, $\mzero$, over the complete semiring $\semiring=\semiringtuple$, such that all entries are $\vzero$, and the \defn{unit matrix} $\mone$ as $(\mone)_{ij} = \vone$ iff $i=j$ and $\vzero$ otherwise for all indices $i,j\in0,\dots,d$. 
It is then straightforward to show that matrix addition is associative and commutative, while matrix multiplication is associative and distributive over matrix addition. 
Hence, the set of square matrices of dimension $d$ over a semiring, with addition and multiplication as defined above, is itself a semiring.
Furthermore, by the element-wise definition of its addition operation, it is also complete.

\subsection{Semiring-weighted prefix algorithm}

We now consider a semiring-weighted CFG $\grammar = \langle \nonterminals, \alphabet, \start, \rules, p, \semiring \rangle$, where $\nonterminals, \alphabet, \start, \rules$ are defined as before but the weighting function $p \colon \rules \to \semiring$ now maps rules to elements of a commutative semiring $\semiring$.\footnote{We require that $\semiring$ be \emph{commutative} because the order of rule applications does not affect string weight in a weighted CFG.\looseness=-1}
Note that we no longer require the rule weights to sum to one.
As before, we define the matrix $\matP$ as the square matrix of dimension $|\nonterminals|$ whose rows and columns are indexed by the non-terminals $\nonterminals$ in some fixed order so that the entry $\matP_{ij}$ corresponds to the weight of getting the non-terminal $\nt{X_j}$ on the left after one rule application to $\nt{X_i}$. 
Since the rule weights are no longer locally normalized, however, we need to include the treesum under the right non-terminal of each rule as an additional term:\footnote{For locally normalized semirings, the treesum of any non-terminal is $\vone$. In the general case when the weights are not normalized, the treesum of a semiring-weighted WCFG can be computed through fixed point iteration in a similar way to Newton's method \citep{esparza_etal_2007_fixed_point}.\looseness=-1}
\begin{equation}
    \matP_{ij} = p(\nt{X}_i\xrightarrow{}\nt{X}_j \ \_) \defeq \bigoplus\limits_{\nt{Y}\in\nonterminals}p(\nt{X}_i\xrightarrow{}\nt{X}_j \nt{Y})\cdot \text{treesum}(\nt{Y})
\end{equation}
We can then calculate the weight of getting $\nt{X}_j$ from $\nt{X}_i$ at the leftmost non-terminal after exactly $k$ derivation steps as $(\matP^k)_{ij}$, where  $\matP^k \defeq \underbrace{\matP\cdots\matP}_{\text{$k$ times}}$. 
Finally, to get the left-corner expectations, we then need to calculate the Kleene closure over the matrix $\matP$,\footnote{Note that the Kleene closure exists since matrices with elements from a complete semiring are complete.} that is, we want to find $\kleene{\matP} = \sum_{k=0}^\infty \matP^k$. 
To compute the Kleene closure over the transition matrix we can use an efficient algorithm by \citet{lehmann-1977-algebraic} which is a generalization of the well-known shortest-path algorithm usually attributed to \citet{floyd-1962-algorithm} and \citet{warshall-1962-theorem}, but introduced previously by \citet{roy-1959-transitivie}.%
\footnote{The generalization is by way of choosing the appropriate semiring for the given problem.
By the same token, Lehmann's algorithm can also be seen as a generalization of \citeposs{kleene_1956_representation} algorithm for converting finite-state automata to regular expressions and the Gauss--Jordan algorithm for computing matrix inversion (see e.g. \citet{althoen_mclaughlin_1987_gauss_jordan}).}
The algorithm works under the condition that the Kleene closure of all individual matrix entries from semiring $\semiring$ exists, which is true for our case since we assumed $\semiring$ to be complete. 
The algorithm is shown in \cref{alg:lehmann}.

\begin{algorithm}[h!]
\caption{Lehmann's algorithm for computing the Kleene closure over a transition matrix}
\label{alg:lehmann}
\begin{algorithmic}[1]
    \Func{Lehmann($\matM$)}
    \State $d \xleftarrow{} \dim(\matM)$ \InlineComment{$\matM$ is a $d\times d$ matrix over a complete semiring}
    \State $\matM^{(0)} \xleftarrow{} \matM$
    \For{$j=1, \dots, d$}
        \For{$i=1,\dots, d$}
            \For{$k=1,\dots, d$}
                \State $\matM_{ik}^{(j)} \xleftarrow{} \matM_{ik}^{(j-1)} \oplus \matM_{ij}^{(j-1)} \otimes \kleene{\left(\matM_{jj}^{(j-1)}\right)}\otimes \matM_{jk}^{(j-1)}$
            \EndFor
        \EndFor
    \EndFor
    \State \Return $\mone + \matM^{(d)}$
    \EndFunc
\end{algorithmic}
\end{algorithm}
Lastly, since we no longer have normalized rule weights, we need to set the prefix weight of $\varepsilon$ under any non-terminal $\nt{X}\in\nonterminals$ to the treesum under $\nt{X}$.
With this, we can now generalize our prefix weight algorithm to semirings, as shown in \cref{alg:fast_semiring_lri}.
\begin{algorithm}[h!]
\caption{Faster prefix algorithm over semirings}
\label{alg:fast_semiring_lri}
\begin{algorithmic}[1]
    \State $\matP' \xleftarrow{} \text{Lehmann}(\matP)$ \InlineComment{Precompute $\kleene{\matP}$ with \cref{alg:lehmann}}\label{fsjl:pstar}
    \For {$\nt{X}_i,\nt{X}_j\in\nonterminals$}\InlineComment{Assign $\plc(\nt{X}_j\mid \nt{X}_i)$}
        \State $\plc(\nt{X}_j\mid \nt{X}_i)\xleftarrow{} {\matP'}_{ij}$
    \EndFor
    \Func{FastSemiringJL($\str=w_1 \cdots w_N,\grammar$)}
    \State $\ppre{\cdot}{\cdot}{\cdot} \xleftarrow{} \vzero$ \InlineComment{Initialize prefix probabilities}
    \For{$k \in 0,\dots, N$}
        \For{$\nt{X} \in \nonterminals$} \InlineComment{Prefix weight of $\varepsilon$}
            \State $\ppre{k}{k}{X}\xleftarrow{} \treesum(\nt{X})$ \label{fjls:eps}
        \EndFor
    \EndFor
    \State $\pinside \xleftarrow{} \text{CKY}(\str)$\InlineComment{Compute $\pinside$ with \cref{alg:cky}}
    \For {$i,j=0,\dots, N$}
        \For{$\nt{X,Z}\in\nonterminals$}\InlineComment{Compute $\gamma$ by \cref{eq:gamma}}
            \State $\gamma_{ij}(\nt{X,Z}){\xleftarrow{}}\bigoplus\limits_{\nt{Y}\in\nonterminals} p(\nt{X}{\xrightarrow{}}\nt{Y}\nt{Z})\otimes\pins{i}{j}{Y}$
        \EndFor
        \For{$\nt{X,Z}\in\nonterminals$}\InlineComment{Compute $\delta$ by \cref{eq:delta}}
            \State $\delta_{ij}(\nt{X,Z}){\xleftarrow{}}\bigoplus\limits_{\nt{Y}\in\nonterminals}\plc(\nt{Y}\mid\nt{X})\otimes\gamma_{ij}(\nt{Y},\nt{Z})$
        \EndFor
    \EndFor
    \For{$k \in 0,\dots, N-1$}
        \For {$\nt{X} \in \nonterminals$} \InlineComment{Compute base case}
            \State $\ppre{k}{k{+}1}{X}{\xleftarrow{}}\bigoplus\limits_{\nt{Y}{\in}\nonterminals}\plc(\nt{Y} \mid \nt{X})\otimes p(\nt{Y}{\xrightarrow{}}w_{k+1})$
        \EndFor
    \EndFor
    \For{$\ell \in 2 \hdots N$}
        \For{$i \in 1 \hdots N-\ell$}
            \State $k\xleftarrow{}i+\ell$
            \For{$\nt{X, Z}\in\nonterminals$} \InlineComment{Recursively compute $\pprefix$}
                \State $\ppre{i}{k}{X}$ \texttt{+=} $\bigoplus\limits_{j=i+1}^{k-1}\delta_{ij}( \nt{X},\nt{Z})\otimes\ppre{j}{k}{Z}$
            \EndFor
        \EndFor
    \EndFor
    \State\Return $\pprefix$
\EndFunc
\end{algorithmic}
\end{algorithm}

\begin{proposition}
    The semiring-weighted version of our algorithm runs in $\bigO{N^2|\nonterminals|^3+N^3|\nonterminals|^2}$.
\end{proposition}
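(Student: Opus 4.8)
The plan is to mirror the complexity analysis already given for \cref{alg:fast_lri} (the real-weighted \textsf{FastJL}), checking that replacing $+$ with $\oplus$ and $\cdot$ with $\otimes$ changes nothing about the number of primitive operations, and that the one genuinely new subroutine — Lehmann's algorithm in place of explicit matrix inversion — does not blow up the bound. Concretely, I would walk through \cref{alg:fast_semiring_lri} line by line and attribute a cost to each block, exactly as in the earlier proof.

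First I would handle the preamble (lines 2--5). Computing $\pinside$ via CKY costs $\bigO{N^2|\nonterminals|^3 + N^3|\nonterminals|^2}$ by the \citet{eisner-blatz-2007} bound invoked earlier, and this carries over to any semiring since CKY uses only $\oplus$ and $\otimes$. The new step is $\kleene{\matP} \leftarrow \textsf{Lehmann}(\matP)$: inspecting \cref{alg:lehmann}, the triple loop over $i,j,k \in 1,\dots,d$ with $d = |\nonterminals|$ performs a constant number of semiring operations (including one scalar Kleene star $\kleene{(\matM_{jj}^{(j-1)})}$, which exists because $\semiring$ is complete) per iteration, for a total of $\bigO{|\nonterminals|^3}$ — the same order as the matrix inversion it replaces. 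Copying out the entries of $\kleene{\matP}$ into $\plc(\nt{X}_j \mid \nt{X}_i)$ is $\bigO{|\nonterminals|^2}$.

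Next I would cost the main body. The $\gamma$ and $\delta$ precomputation (lines 6--11) ranges over $i,j \in 1,\dots,N$ and, for each, over $\nt{X},\nt{Z} \in \nonterminals$ while summing over $\nt{Y} \in \nonterminals$, giving $\bigO{N^2|\nonterminals|^3}$; the base case (lines 12--14) is $\bigO{N|\nonterminals|^2}$; and the recursive fill of $\pprefix$ (lines 15--20) ranges over $\ell, i, j \leq N$ and $\nt{X},\nt{Z} \in \nonterminals$, giving $\bigO{N^3|\nonterminals|^2}$. All of these counts are identical to those in the proof for \cref{alg:fast_lri} because the semiring version performs the same loop nest with $\oplus,\otimes$ substituted for $+,\cdot$, and each such operation is assumed $\bigO{1}$. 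Summing the blocks and absorbing lower-order terms yields $\bigO{N^2|\nonterminals|^3 + N^3|\nonterminals|^2}$.

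The main obstacle — really the only non-bookkeeping point — is justifying that Lehmann's algorithm both terminates and is the right tool here, i.e.\ that it genuinely computes $\kleene{\matP} = \bigoplus_{k=0}^\infty \matP^k$ in $\bigO{|\nonterminals|^3}$ under the completeness assumption, and that the per-step scalar Kleene stars it invokes are well-defined; but this is exactly what \citet{lehmann-1977-algebraic} establishes and what the surrounding text has already set up (completeness of $\semiring$, hence of $\semiring^{d\times d}$, guarantees the needed closures exist). Given that, the proof is a routine re-run of the earlier complexity argument, so I would keep it short: state the per-block costs, note the substitution-of-operations principle, cite Lehmann for the $\kleene{\matP}$ step, and conclude.
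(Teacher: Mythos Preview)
Your proposal is correct and takes essentially the same approach as the paper: show that Lehmann's algorithm costs $\bigO{|\nonterminals|^3}$ (under the assumption that scalar Kleene stars are $\bigO{1}$) and then observe that every other block of \cref{alg:fast_semiring_lri} has the same cost as its counterpart in \cref{alg:fast_lri}. The paper's own proof is terser---it argues only the Lehmann step explicitly and then says ``the complexity of the overall algorithm remains unchanged''---whereas you spell out each block, but the reasoning is the same.
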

\begin{proof}
Lehmann's algorithm, as presented in \cref{alg:lehmann}, has three nested for loops of $d$ iterations each, where $d$ is the dimension of the input matrix. 
In our case, $d$ is the number of non-terminals, $|\nonterminals|$. 
Assuming the Kleene closure of elements in $\semiring$ can be evaluated in $\bigO{1}$, this means that computing the left corner expectations in \cref{fsjl:pstar} of \cref{alg:fast_semiring_lri} takes $\bigO{|\nonterminals|^3}$, as before.
Hence, the complexity of the overall algorithm remains unchanged, that is, we can compute the prefix probabilities under a semiring-weighted, locally normalized CFG $\grammar$ in $\bigO{N^2|\nonterminals|^3+N^3|\nonterminals|^2}$.
\end{proof}

\end{document}